\newtheorem{thm}{Theorem}
\newtheorem{lem}{Lemma}
\renewcommand{\chi}{\delta}
\def\supp{\mbox{supp}}
\def\zeros{\mathbf{0}}
\def\xb{\mbox{\boldmath $x$}}
\def\yb{\mbox{\boldmath $y$}}
\def\sb{\mbox{\boldmath $s$}}
\def\ob{\mbox{\boldmath $o$}}
\newcommand{\Del}[3]{\Delta_{{#1},{#2}} f({#3})}
\def\Fl{\mathcal{F}}
\def\Bl{\mathcal{B}}
\def\MM{{\bf (M2)}}
\def\MMM{{\bf (M3)}}
\def\epre{e_{\text{last}}}
\def\val{\text{Value}}
\def\EO{\text{EO}}
\def\MO{\text{MO}}
\def\R{\mathbb{R}}
\def\N{\mathbb{N}}
\DeclareMathOperator*{\maximize}{maximize}
\newcounter{counter}
\let\oldref\ref
\def\ref#1{{\normalfont\oldref{#1}}}
\def\eqref#1{{\normalfont(\oldref{#1})}}
\active\gdef@{\mkern1mu}}
\mathchardef\Gamma="7100 \mathchardef\Delta="7101
\mathchardef\Theta="7102 \mathchardef\Lambda="7103
\mathchardef\Psi="7104 \mathchardef\Pi="7105 \mathchardef\Sigma="7106
\mathchardef\Upsilon="7107 \mathchardef\Phi="7108
\mathchardef\Psi="7109 \mathchardef\Omega="710A
\begin{document}

\title{On maximizing a monotone $k$-submodular function \\subject to a matroid constraint}
	\author{Shinsaku Sakaue\thanks{%
			 NTT Communication Science Laboratories, 2-4 Hikaridai, Seika-cho, Soraku-gun, Kyoto 619-0237, Japan 
		(sakaue.shinsaku@lab.ntt.co.jp). }}
	\date{\today 
	}

\maketitle

\begin{abstract}
\noindent
A $k$-submodular function is an extension of a submodular function in that its input is given by $k$ disjoint subsets instead of a single subset. 
For unconstrained nonnegative $k$-submodular maximization, 
Ward and {\v Z}ivn{\' y} proposed a constant-factor approximation algorithm, 
which was improved by the recent work of Iwata, Tanigawa and Yoshida 
presenting a $1/2$-approximation algorithm. 
Iwata {\it et al.} also provided a $k/(2k-1)$-approximation algorithm for 
monotone $k$-submodular maximization and proved that its approximation ratio is asymptotically tight. 
More recently, Ohsaka and Yoshida proposed constant-factor algorithms for monotone $k$-submodular maximization with 
several size constraints. 
However, while submodular maximization with various constraints has been extensively studied, 
no approximation algorithm has been developed for constrained $k$-submodular maximization, 
except for the case of size constraints. 

In this paper, we prove that a greedy algorithm outputs a $1/2$-approximate solution for monotone $k$-submodular maximization with a matroid constraint. 
The algorithm runs in $O(M|E|(\MO + k\EO))$ time,  
where $M$ is the size of a maximal optimal solution, 
$|E|$ is the size of the ground set, and 
$\MO, \EO$ represent the time for the membership oracle of the matroid and the evaluation oracle of the $k$-submodular function, 
respectively.

\end{abstract}

%
%
	
	\pagestyle{myheadings} 
	\thispagestyle{plain}
	\markboth{SHINSAKU SAKAUE} 
	{\uppercase{$k$-submodular maximization with a matroid constraint}}

\section{Introduction} 
Let $E$ be a finite set and $2^E$ be the family of all subsets in $E$. 
A function $f:2^E\to\R$ is called {\it submodular} if it satisfies
\[
f(X)+f(Y)\ge f(X\cup Y) + f(X\cap Y) 
\]
for all pairs of $X,Y\in 2^E$.  
It is well known that the following {\it diminishing return property} characterizes 
the submodular function:
\[
f(X\cup \{e\}) - f(X) \ge f(Y \cup \{e\}) - f(Y)
\] 
for any $X\subseteq Y$ and $e\in E\backslash Y$. 
The diminishing return property often appears in practice, 
and so various problems can be formulated as submodular function maximization (e.g.,  
sensor placement~\cite{krause2008robust,krause2008near}, 
feature selection~\cite{ko1995exact}, and document summarization~\cite{lin2010multi}). 
Unfortunately, submodular function maximization is known to be NP-hard.  
Therefore, approximation algorithms that can run in polynomial time 
have been extensively studied for submodular function maximization,   
some of which consider various constraints (e.g.,~\cite{buchbinder2015tight,calinescu2007maximizing,nemhauser1978analysis,sviridenko2004note}).

Recently, Huber and Kolmogorov~\cite{huber2012towards} proposed 
{$k$-submodular functions}, 
which express the submodularity on choosing $k$ disjoint sets of elements, instead of 
a single set.  
More precisely, let $(k+1)^E:=\{(X_1,\dots,X_k)\mid X_i\subseteq E\ (i=1,\dots,k),\ X_i\cap X_j=\emptyset\ (i\neq j)\}$. 
Then, a function $f:(k+1)^E\to\R$ is called {\it $k$-submodular} if,  
for any $\xb=(X_1,\dots,X_k)$ and $\yb=(Y_1,\dots,Y_k)$ in $(k+1)^E$, 
we have
\[
f(\xb)+f(\yb)\ge f(\xb\sqcup\yb) + f(\xb\sqcap\yb)
\]
where 
\begin{align*}
\xb\sqcap\yb&:=(X_1\cap Y_1,\dots,X_k\cap Y_k), \\
\xb\sqcup\yb&:=\biggl( X_1\cup Y_1\backslash \Bigl(\bigcup_{i\neq1} X_i\cup Y_i \Bigr),\dots, X_k\cup Y_k\backslash \Bigl(\bigcup_{i\neq k} X_i\cup Y_i \Bigr) \biggr). 
\end{align*} 
For an input $\xb=(X_1,\dots,X_k)$ of a $k$-submodular function, 
we define the size of $\xb$ by $\big|\bigcup_{i\in \{1,\dots,k\}} X_i\big|$. 
We say $f$ is {\it monotone} if $f(\xb)\le f(\yb)$ holds 
for any $\xb=(X_1,\dots, X_k)$ and $\yb=(Y_1,\dots,Y_k)$ with 
$X_i\subseteq Y_i$ for $i=1,\dots,k$. 
It is known that $k$-submodular functions arise as relaxation of NP-hard problems.    
$k$-submodular functions also appear in many applications.  
Therefore, the $k$-submodular function is recently a popular subject of study~\cite{gridchyn2013potts,hirai2015k}. 
If $k=1$, the above definition is equivalent to that of submodular functions. 
If $k=2$, the $k$-submodular function is equivalent to the so-called {\it bisubmodular function}, 
for which maximization algorithms have been widely studied~\cite{iwata2013bisubmodular,ward2014maximizing}. 
For unconstrained nonnegative $k$-submodular maximization, 
Ward and {\v Z}ivn{\' y}~\cite{ward2014maximizing} proposed a $\max\{1/3, 1/(1+a)\}$-approximation algorithm, 
where $a=\max\{1,\sqrt{(k-1)/4}\}$. 
Iwata, Tanigawa and Yoshida~\cite{iwata2016improved} improved the approximation ratio  to $1/2$. 
They also proposed a $k/(2k-1)$-approximation algorithm for monotone $k$-submodular maximization,  
and proved that, for any $\varepsilon>0$, 
a $((k+1)/2k+\varepsilon)$-approximation algorithm for maximizing monotone $k$-submodular functions 
requires exponentially many queries. This means their approximation ratio is asymptotically tight. 
More recently, Ohsaka and Yoshida~\cite{ohsaka2015monotone} proposed 
a $1/2$-approximation algorithm for monotone $k$-submodular maximization with a total size constraint 
(i.e., $\big| \bigcup_{i\in\{1,\dots,k\}} X_i\big|\le N$ for a nonnegative integer $N$) 
and a $1/3$-approximation algorithm for that with individual size constraints
(i.e., $|X_i|\le N_i$ for $i=1,\dots,k$ with associated nonnegative integers $N_1,\dots,N_k$).

In this paper,  
we prove that $1/2$-approximation can be achieved for monotone $k$-submodular maximization with a matroid constraint. 
This approximation ratio is asymptotically tight due to the aforementioned hardness result by Iwata {\it et al.}~\cite{iwata2016improved}. 
Given $\Fl\subseteq2^E$,  
we say a system $(E,\Fl)$ is {\it matroid} if the following holds:
\begin{description}
	\item[(M1)] $\emptyset\in\Fl$,
	\item[(M2)] If $A\subseteq B\in\Fl$ then $A\in\Fl$,
	\item[(M3)] If $A,B\in\Fl$ and $|A|<|B|$ then there exists $e\in B\backslash A$ such that $A\cup\{e\}\in\Fl$.
\end{description}
The elements of $\Fl$ are called {\it independent}, 
and we say $A\in\Fl$ is {\it maximal} if no $B\in\Fl$ satisfies $A\subsetneq B$.
Matroids include various systems; the total size constraint can be written as a special case of a matroid constraint.  
For example, the following systems $(E,\Fl)$ are matroids:
\begin{description}
	\item[(a)] $E$ is a finite set, and $\Fl:=\{F\subseteq E\mid |F|\le N\}$ where $N$ is a nonnegative integer. 
	\item[(b)] $E$ is the set of columns of a matrix over some field, and \\
	$\Fl:=\{F\subseteq E\mid \mbox{The columns in $F$ are linearly independent over the field}\}$.  
	\item[(c)] $E$ is the set of edges of a undirected graph $G$ with a vertex set $V$, and \\
	$\Fl:=\{ F\subseteq E\mid \mbox{The graph $(V,F)$ is a forest} \}$.
	\item[(d)] $E$ is a finite set partitioned into $\ell$ sets $E_1,\dots,E_\ell$ with associated nonnegative integers $N_1,\dots,N_\ell$, and 
	$\Fl:=\{ F\subseteq E\mid |F\cap E_i|\le N_i \mbox{ for $i=1,\dots,\ell$}\}$.
\end{description}
The total size constraint corresponds to {\bf (a)}, 
which is called a {\it uniform matroid}. 
Since submodular functions and matroids are capable of modeling various problems, 
approximation algorithms for submodular function maximization (i.e., $k=1$) with a matroid constraint have been extensively studied~\cite{calinescu2007maximizing,filmus2012tight,fisher1978analysis,golovin2011adaptive,lee2010maximizing}. 
However, to the best of our knowledge, no approximation algorithm has been studied for $k$-submodular maximization with a matroid constraint. 
Therefore, 
we show that a greedy algorithm provides a $1/2$-approximate solution for 
the following monotone $k$-submodular maximization with a matroid constraint:
\begin{align}
\maximize_{\xb\in(k+1)^E} \ f(\xb)
\qquad \mbox{subject to }\ 
\bigcup_{\ell\in\{1,\dots,k\}} X_\ell\in\Fl, \label{problem:ksubmomatroid}
\end{align}
where $\xb=(X_1,\dots,X_k)$. 
We also show that our algorithm incurs $O(M|E|(\MO + k\EO))$ computation cost,   
where $M$ is the size of a maximal optimal solution, and 
$\MO, \EO$ represent the time for the membership oracle of the matroid and the evaluation oracle of the $k$-submodular function, 
respectively. 
We see in Section~\ref{section:preliminaries} that all maximal optimal solutions for problem~\eqref{problem:ksubmomatroid} have equal size, 
which we denote by $M$ throughout this paper.

The rest of this paper is organized as follows. 
Section~\ref{section:preliminaries} reviews some basics of $k$-submodular functions and matroids. 
Section~\ref{section:maximize} discusses a greedy algorithm for problem~\eqref{problem:ksubmomatroid} and proves the $1/2$-approximation. 
We conclude this paper in Section~\ref{section:conclusion}.

\section{Preliminaries}\label{section:preliminaries} 
We elucidate some properties of a $k$-submodular function $f$ where $k\in\N$.  
Let $[k]:=\{1,2,\dots,k\}$. 
For $\xb=(X_1,\dots,X_k)$ and $\yb=(Y_1,\dots,Y_k)$ in $(k+1)^E$, 
we define a partial order $\preceq$ such that $\xb\preceq\yb$ if $X_i\subseteq Y_i$ for all $i\in[k]$. 
For $\xb,\yb\in(k+1)^E$ satisfying $\xb\preceq\yb$, we use $\xb\prec\yb$ if $X_i\subsetneq Y_i$ holds for some $i\in[k]$.
We also define 
\[
\Del{e}{i}{\xb}:=f(X_1,\dots,X_i\cup\{e\},\dots,X_k)-f(X_1,\dots,X_k)  
\]
for $\xb\in(k+1)^E$, $e\notin \bigcup_{\ell\in[k]}X_\ell$ and $i\in[k]$, 
which is a marginal gain when adding $e\in E$ to the $i$-th set of $\xb\in(k+1)^E$.
It is not hard to see that the
$k$-submodularity implies the {\it orthant submodularity}~\cite{ward2014maximizing}:
\[
\Del{e}{i}{\xb}\ge\Del{e}{i}{\yb}
\]
for any $\xb,\yb\in(k+1)^E$ with $\xb\preceq\yb,\ e\notin\bigcup_{j\in[k]} Y_j$, 
and $i\in[k]$, 
and the {\it pairwise monotonicity}:  
\[
\Del{e}{i}{\xb}+\Del{e}{j}{\xb}\ge0
\]
for any $\xb\in(k+1)^E$, $e\notin\bigcup_{\ell\in[k]} X_\ell$, 
and $i,j\in[k]$ with $i\neq j$. 
Actually, these properties characterize $k$-submodular functions: 
\begin{thm}[Ward and {\v Z}ivn{\' y}~\cite{ward2014maximizing}]
A function $f:(k+1)^E\to\R$ is $k$-submodular if and only if $f$ is orthant submodular and pairwise monotone.
\end{thm}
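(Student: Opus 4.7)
The forward direction follows from specializing the $k$-submodular inequality. Fix $i\in[k]$, $\xb\preceq\yb$, and $e\notin\bigcup_{j\in[k]} Y_j$, and let $\xb'$ (resp.\ $\yb'$) be obtained from $\xb$ (resp.\ $\yb$) by adding $e$ to its $i$-th coordinate. Using $X_j\subseteq Y_j$ for all $j$ and $e\notin\bigcup_{j\in[k]} Y_j$, a brief direct check gives $\xb'\sqcap\yb=\xb$ and $\xb'\sqcup\yb=\yb'$, whereupon the $k$-submodular inequality for $\xb',\yb$ rearranges to orthant submodularity. For pairwise monotonicity, pick distinct $i,j\in[k]$ and let $\xb^{(i)}$, $\xb^{(j)}$ be obtained from $\xb$ by adding $e$ to the $i$-th and $j$-th coordinate, respectively. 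Because $e$ is placed in different coordinates in the two tuples, both $\xb^{(i)}\sqcap\xb^{(j)}$ and $\xb^{(i)}\sqcup\xb^{(j)}$ simplify to $\xb$, and the $k$-submodular inequality immediately yields $\Del{e}{i}{\xb}+\Del{e}{j}{\xb}\ge 0$.

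For the converse, I would induct on the disagreement count $d(\xb,\yb):=|\{e\in E : a_e(\xb)\neq a_e(\yb)\}|$, where $a_e(\xb):=\ell$ if $e\in X_\ell$ and $a_e(\xb):=0$ otherwise. When $d=0$ we have $\xb=\yb$ and the $k$-submodular inequality is an equality. For the inductive step, pick any $e$ with $a_e(\xb)\neq a_e(\yb)$ and split according to whether the two indices $a_e(\xb), a_e(\yb)$ are zero.

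The informative case is $a_e(\xb)=i>0$ and $a_e(\yb)=0$ (the symmetric case is analogous). Let $\tl{\xb}$ denote $\xb$ with $e$ removed from its $i$-th coordinate, so that $d(\tl{\xb},\yb)=d(\xb,\yb)-1$. A direct computation shows $\tl{\xb}\sqcap\yb=\xb\sqcap\yb$ and that $\xb\sqcup\yb$ is obtained from $\tl{\xb}\sqcup\yb$ by inserting $e$ in the $i$-th coordinate. Applying the inductive hypothesis to $(\tl{\xb},\yb)$ and expanding $f(\xb)=f(\tl{\xb})+\Del{e}{i}{\tl{\xb}}$ together with $f(\xb\sqcup\yb)=f(\tl{\xb}\sqcup\yb)+\Del{e}{i}{\tl{\xb}\sqcup\yb}$ reduces the goal to $\Del{e}{i}{\tl{\xb}}\ge\Del{e}{i}{\tl{\xb}\sqcup\yb}$. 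The remaining case $a_e(\xb)=i>0$, $a_e(\yb)=j>0$, $i\neq j$, is handled similarly by deleting $e$ from both tuples and invoking pairwise monotonicity at the common meet.

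The principal obstacle is that $\tl{\xb}\not\preceq\tl{\xb}\sqcup\yb$ in general, because a coordinate of $\tl{\xb}$ may overlap with a differently indexed coordinate of $\yb$; consequently orthant submodularity does not apply directly to these two tuples. The fix is to travel from $\tl{\xb}$ to $\tl{\xb}\sqcup\yb$ via a sequence of single-element modifications and telescope the residual gap in $\Del{e}{i}{\cdot}$ using orthant submodularity and pairwise monotonicity one step at a time. Carrying out this bookkeeping while tracking the meet and join through each modification is the main technical burden, and is the heart of the Ward--{\v Z}ivn{\' y} argument.
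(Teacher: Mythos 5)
The paper does not prove this theorem; it is imported verbatim from Ward and {\v Z}ivn{\'y}, so there is no in-paper argument to compare against. Your forward direction is complete and correct: the identities $\xb'\sqcap\yb=\xb$, $\xb'\sqcup\yb=\yb'$ and $\xb^{(i)}\sqcap\xb^{(j)}=\xb^{(i)}\sqcup\xb^{(j)}=\xb$ all check out under the stated hypotheses, and they specialize the $k$-submodular inequality to exactly the two properties.

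The converse, however, has a genuine gap, and you have located it yourself: after the reduction to $\Del{e}{i}{\tl{\xb}}\ge\Del{e}{i}{\tl{\xb}\sqcup\yb}$, the two arguments $\tl{\xb}$ and $\tl{\xb}\sqcup\yb$ are in general incomparable under $\preceq$, and the ``travel by single-element modifications and telescope'' plan is not a routine fix. Concretely, any path from $\tl{\xb}$ to $\tl{\xb}\sqcup\yb$ must include \emph{removal} steps (for elements $u$ with $a_u(\xb)=\ell\neq m=a_u(\yb)$, both nonzero, which vanish from the join), and orthant submodularity bounds the marginal $\Del{e}{i}{\cdot}$ the wrong way across a removal: it says the marginal at the smaller tuple is at least the marginal at the larger one, so deleting an element can only increase the marginal you are trying to bound from above. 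One cannot control each step separately; the removal steps must be paired with insertions and discharged via pairwise monotonicity at carefully chosen intermediate tuples, which is precisely the nontrivial content of the Ward--{\v Z}ivn{\'y} sufficiency proof. Deferring this as ``the main technical burden'' means the converse is a plan, not a proof. Since the paper only ever \emph{uses} the easy direction (that $k$-submodularity implies orthant submodularity and pairwise monotonicity, in the chain of inequalities establishing \eqref{ineq:stepwise}), the part you did prove is the part that matters here, but as a proof of the stated equivalence the argument is incomplete.
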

For notational ease, 
we identify $(k+1)^E$ with $\{0,1,\dots,k\}^E$, that is, 
we associate $(X_1,\dots, X_k)\in(k+1)^E$ with $\xb\in\{0,1,\dots, k\}^E$ 
by $X_i=\{e\in E\mid \xb(e)=i\}$ for $i\in[k]$. 
We sometimes abuse the notation, 
and simply write $\xb=(X_1\dots,X_k)$ by 
regarding a vector $\xb$ as disjoint $k$ subsets of $E$. 
For $\xb\in\{0,1,\dots,k\}^E$, we define $\supp(\xb):=\{e\in E\mid \xb(e)\neq0\}$; 
the size of $\xb$ can be written as $|\supp(\xb)|$. 
Let $\zeros$ be the zero vector in $\{0,1,\dots,k\}^E$. 
In what follows, 
we assume that the monotone $k$-submodular function $f$ in problem~\eqref{problem:ksubmomatroid} satisfies $f(\zeros)=0$ 
without loss of generality; 
if $f(\zeros)\neq0$, we redefine $f(\xb):=f(\xb)-f(\zeros)$ where $\xb\in(k+1)^E$. 

We now turn to some properties of matroid $(E,\Fl)$. 
An independent set $A\in\Fl$ is called a {\it bases} if it is a maximal independent set. 
We denote the set of all bases by $\Bl$.
It is known that each element in $\Bl$ has the same size (see, e.g.,~\cite[Theorem~13.5]{korte2012combinatorial}); 
the size is denoted by $M$ throughout this paper. 
Thus, we have the following lemma for the size of the maximal optimal solutions for problem~\eqref{problem:ksubmomatroid}. 
\begin{lem}\label{lemma:cardinality} 
The size of any maximal optimal solution for problem~\eqref{problem:ksubmomatroid} 
is $M$.
\end{lem}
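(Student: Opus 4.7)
The plan is to argue by contradiction, combining the monotonicity of $f$ with the matroid augmentation axiom (M3). Let $\xb=(X_1,\dots,X_k)$ be any maximal optimal solution of problem~\eqref{problem:ksubmomatroid}, and set $S:=\supp(\xb)=\bigcup_{\ell\in[k]}X_\ell$. Because $S\in\Fl$ and every base of $(E,\Fl)$ has size $M$, the upper bound $|S|\le M$ is immediate; all that remains is to rule out $|S|<M$.

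Suppose for contradiction that $|S|<M$. Then I would fix any base $B\in\Bl$; since $|S|<|B|$, axiom (M3) supplies some $e\in B\setminus S$ with $S\cup\{e\}\in\Fl$. Next I would define $\yb\in(k+1)^E$ by $\yb(e'):=\xb(e')$ for $e'\neq e$ and $\yb(e):=1$, i.e., place $e$ into the first coordinate $X_1$. By construction, $\xb\prec\yb$ and $\supp(\yb)=S\cup\{e\}\in\Fl$, so $\yb$ is feasible.

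Finally, monotonicity of $f$ gives $f(\yb)\ge f(\xb)$, and since $\xb$ is optimal this inequality must be an equality. Hence $\yb$ is itself a feasible optimal solution strictly above $\xb$ under $\preceq$, contradicting the maximality of $\xb$. This forces $|S|=M$.

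I do not foresee a real obstacle here; the argument is the standard pairing of the matroid exchange axiom with monotonicity, and it does not require invoking the orthant submodularity or pairwise monotonicity of $f$ at all. The only point that merits a moment's thought is verifying that inserting $e$ into a single chosen coordinate yields a vector that is both feasible and strictly greater than $\xb$ under $\preceq$, which follows immediately from $S\cup\{e\}\in\Fl$ and the definition of $\prec$.
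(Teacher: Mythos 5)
Your proposal is correct and follows essentially the same route as the paper: assume the support has size less than $M$, use \MMM\ against a base to find an element $e$ whose addition preserves independence, insert $e$ into some coordinate (you pick the first; the paper picks an arbitrary $i\in[k]$), and invoke monotonicity plus optimality to contradict maximality under $\preceq$. No gaps.
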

\begin{proof}
Assume there is a maximal optimal solution $\ob$ such that $|\supp(\ob)|<M$. 
Let $\xb\in(k+1)^E$ be an arbitrary vector such that $\supp(\xb)\in\Bl$. 
Then, by \MMM, there exists $e\in\supp(\xb)\backslash\supp(\ob)$ such that 
$\supp(\ob)\cup\{e\}\in\Fl$. Since $f$ is monotone, 
by assigning arbitrary $i\in[k]$ to $\ob(e)$, we get $\Del{e}{i}{\ob}\ge0$; 
more precisely, $\Del{e}{i}{\ob}=0$ since $\ob$ is an optimal solution.  
This contradicts to the assumption that $\ob$ is a maximal  optimal solution.
\end{proof}

We also introduce the following lemma for later use. 
\begin{lem}\label{lemma:matroidadd} 
Suppose $A\in\Fl$ and $B\in\Bl$ satisfy $A\subsetneq B$. Then, for any $e\notin A$ satisfying $A\cup\{e\}\in\Fl$, there exists $e'\in B\backslash A$ such that $\{B\backslash\{e'\}\}\cup\{e\}\in\Bl$.		
\end{lem}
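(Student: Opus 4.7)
The plan is to split into cases based on whether $e\in B$. If $e\in B$, then since $e\notin A$ we have $e\in B\backslash A$, and choosing $e'=e$ makes $(B\backslash\{e'\})\cup\{e\}=B$, which is a basis by assumption; this case is immediate.

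The substantive case is $e\notin B$. I would first apply \MMM\ repeatedly to the independent set $A\cup\{e\}$ and the basis $B$ to extend $A\cup\{e\}$ to some basis $B'\in\Bl$; this is possible because every basis has size $M$ and $|A\cup\{e\}|\le|B|=M$. Then $e\in B'\backslash B$, so I would invoke the basis exchange property of matroids: for bases $B,B'$ and any element $e\in B'\backslash B$, there exists $e'\in B\backslash B'$ such that $(B\backslash\{e'\})\cup\{e\}\in\Bl$. Finally, since $A\cup\{e\}\subseteq B'$, in particular $A\subseteq B'$, so any $e'\in B\backslash B'$ satisfies $e'\notin A$, giving $e'\in B\backslash A$ as required.

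The only step that is not a direct appeal to \M--\MMM\ is the basis exchange property, but it is a classical consequence of the matroid axioms (see, e.g., Korte--Vygen). An alternative route that avoids citing exchange in this form is to argue via the unique circuit $C\subseteq B\cup\{e\}$ through $e$: since $A\cup\{e\}\in\Fl$ is independent, $C$ cannot be a subset of $A\cup\{e\}$, so some $e'\in C\backslash\{e\}$ lies in $B\backslash A$; removing any such $e'$ from $B\cup\{e\}$ yields an independent set of size $M$, hence a basis of the required form $(B\backslash\{e'\})\cup\{e\}$. Either route makes the main obstacle essentially bookkeeping rather than a deep argument.
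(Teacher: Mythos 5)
Your proof is correct, and its core move is the same as the paper's: extend the independent set $A\cup\{e\}$ into the basis $B$ by repeated application of \MMM. The difference is in how you finish. The paper observes that this extension is already the answer: the added elements all come from $B\backslash(A\cup\{e\})$, and a count shows exactly one element $e'$ of $B\backslash A$ is left out, so the extended basis literally equals $\{B\backslash\{e'\}\}\cup\{e\}$. You instead stop at ``some basis $B'$'' and then invoke the classical basis-exchange property between $B$ and $B'$ (or, in your alternative, the unique-circuit argument) to produce $e'$. Both external facts are true and standard, but they are not among the axioms \M--\MMM\ the paper works from, and here they are doing no real work: in your construction $B\backslash B'$ is already a singleton and $B'=(B\backslash\{e'\})\cup\{e\}$ on the nose, so the exchange step only re-derives what you already have. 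Your explicit case split on $e\in B$ versus $e\notin B$ is a clean touch the paper handles implicitly (it splits on $|B|-|A|=1$ versus $\ge 2$ instead), and the circuit-based alternative is a perfectly good self-contained route for a reader who prefers not to unwind \MMM\ iteratively. In short: same skeleton, slightly heavier machinery at the end than necessary; noticing that the \MMM-extension itself has the required form would let you drop the appeal to basis exchange entirely.
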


\begin{proof}
If $|B|-|A|=1$, by defining $e'=B\backslash A$, we get $\{B\backslash\{e'\}\}\cup\{e\}=A\cup\{e\}\in\Fl$. 
Since $|A\cup\{e\}|=|B|$, we have $\{B\backslash\{e'\}\}\cup\{e\}\in\Bl$. 
	
If $|B|-|A|\ge2$, then $|A\cup\{e\}|<|B|$. 
Thus, by applying \MMM\ iteratively, we can obtain $|B|-|A|-1$ elements $e_1,\dots,e_{|B|-|A|-1}\in B\backslash \{A\cup\{e \}\}$ such that 
\[
\{A\cup\{e\}\}\cup \{e_1\}\cup\dots\cup \{e_{|B|-|A|-1}\}\in\Bl.
\]
Therefore, defining $e'=B\backslash\{A\cup \{e_1\}\cup\dots\cup \{e_{|B|-|A|-1}\}\}$, we get 
\[
\{B\backslash\{e'\}\}\cup\{e\}=\{A\cup\{e\}\}\cup \{e_1\}\cup\dots\cup \{e_{|B|-|A|-1}\}\in\Bl.
\]
This completes the proof.
\end{proof}

\section{Maximizing a monotone $k$-submodular function with a matroid constraint}\label{section:maximize}
We present a greedy algorithm for problem~\eqref{problem:ksubmomatroid}; it runs in $O(M|E|(\MO + k\EO))$ time 
where $\MO$ and $\EO$ stand for the time for the membership oracle of matroid and the evaluation oracle of $k$-submodular function, 
respectively. 
We then prove that the greedy algorithm outputs a $1/2$-approximate solution for problem~\eqref{problem:ksubmomatroid}. 
In summary, this section proves the following theorem:

\begin{thm}
	For problem~\eqref{problem:ksubmomatroid}, a $1/2$-approximate solution can be obtained in $O(M|E|(\MO + k\EO))$ time.
\end{thm}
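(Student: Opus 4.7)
The plan is to analyze the natural greedy algorithm: starting from $\xb^0 = \zeros$, at iteration $t \ge 1$ scan all pairs $(e,i) \in E \times [k]$ with $e \notin \supp(\xb^{t-1})$ and $\supp(\xb^{t-1}) \cup \{e\} \in \Fl$, pick one maximizing $\Del{e}{i}{\xb^{t-1}}$, call it $(e_t, i_t)$, and let $\xb^t$ agree with $\xb^{t-1}$ except that $\xb^t(e_t) = i_t$. Stop when no feasible pair exists; by \MMM\ this occurs precisely when $\supp(\xb^t) \in \Bl$, so by Lemma~\ref{lemma:cardinality} the algorithm terminates after exactly $M$ iterations. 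Each iteration scans $O(|E|)$ candidates, spending one membership check and $k$ evaluations apiece, yielding the claimed $O(M|E|(\MO + k\EO))$ running time.

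For the approximation ratio, fix an optimal $\ob$ (with $\supp(\ob) \in \Bl$ by Lemma~\ref{lemma:cardinality}) and build a sequence $\ob^0, \ob^1, \ldots, \ob^M$ with $\ob^0 = \ob$, maintaining the invariants $\supp(\ob^t) \in \Bl$ and $\xb^t \preceq \ob^t$. Because $|\supp(\xb^M)| = |\supp(\ob^M)| = M$ and $\xb^M \preceq \ob^M$, the invariants force $\ob^M = \xb^M$. I will establish the per-step bound
\[
f(\ob^{t-1}) - f(\ob^t) \ \le\ f(\xb^t) - f(\xb^{t-1}),
\]
then telescope over $t=1,\ldots,M$ to conclude $f(\ob) - f(\xb^M) \le f(\xb^M)$, i.e.\ $f(\xb^M) \ge \tfrac12 f(\ob)$. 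The transition $\ob^{t-1} \to \ob^t$ splits into three cases. If $\ob^{t-1}(e_t) = i_t$, take $\ob^t = \ob^{t-1}$ and the inequality is trivial. If $\ob^{t-1}(e_t) = j \in [k] \setminus \{i_t\}$, overwrite the value at $e_t$ to $i_t$; the support is preserved, and writing $\tilde \ob$ for $\ob^{t-1}$ with $e_t$ zeroed, monotonicity, orthant submodularity (via $\xb^{t-1} \preceq \tilde \ob$), and the greedy rule applied to the feasible candidate $(e_t, j)$ give
\[
f(\ob^{t-1}) - f(\ob^t) = \Del{e_t}{j}{\tilde \ob} - \Del{e_t}{i_t}{\tilde \ob} \le \Del{e_t}{j}{\tilde \ob} \le \Del{e_t}{j}{\xb^{t-1}} \le \Del{e_t}{i_t}{\xb^{t-1}}.
\]
If $e_t \notin \supp(\ob^{t-1})$, Lemma~\ref{lemma:matroidadd} with $A = \supp(\xb^{t-1})$ and $B = \supp(\ob^{t-1})$ produces $e'_t \in B \setminus A$ such that $(B \setminus \{e'_t\}) \cup \{e_t\} \in \Bl$; I then form $\ob^t$ by removing $e'_t$ and placing $e_t$ in slot $i_t$. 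The analogous three-step chain applies with $e'_t$ and its value $\ob^{t-1}(e'_t)$ playing the role of $(e_t, j)$ on the removed side, using that $\supp(\xb^{t-1}) \cup \{e'_t\} \subseteq \supp(\ob^{t-1}) \in \Fl$ so $(e'_t, \ob^{t-1}(e'_t))$ is a feasible candidate for greedy at step $t$.

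The main obstacle I expect is the third case: one must verify that the surrogate $\ob^t$ obtained via Lemma~\ref{lemma:matroidadd} preserves both invariants $\supp(\ob^t) \in \Bl$ and $\xb^t \preceq \ob^t$ (the latter hinging on $e'_t \notin \supp(\xb^{t-1})$, which is exactly what Lemma~\ref{lemma:matroidadd} guarantees), and simultaneously chain monotonicity, orthant submodularity, and greedy maximality correctly for a candidate $(e'_t, \cdot)$ distinct from the greedy choice $(e_t, i_t)$. Once this case is set up carefully, the remaining cases and the telescoping are routine, and the $1/2$-approximation follows; the runtime bound is by inspection of the algorithm.
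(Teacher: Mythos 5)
Your proposal is correct and follows essentially the same route as the paper: the same greedy algorithm, the same basis-exchange construction of the surrogate sequence $\ob^0,\dots,\ob^M$ via Lemma~\ref{lemma:matroidadd}, and the same per-step inequality chained from the greedy rule, orthant submodularity, and monotonicity, followed by telescoping. The only cosmetic difference is that you split the update of $\ob^{t-1}$ into three cases, whereas the paper handles all of them uniformly through the single swap given by Lemma~\ref{lemma:matroidadd} (which forces $e'_t=e_t$ whenever $e_t\in\supp(\ob^{t-1})$), so your cases collapse into its one argument.
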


\subsection{Greedy algorithm and its complexity analysis}
We consider applying Algorithm~\ref{alg:greedy} to problem~\eqref{problem:ksubmomatroid}. 
\begin{algorithm}
	\caption{A greedy algorithm for $k$-submodular maximization with a matroid constraint}  \label{alg:greedy}
	\begin{algorithmic}[1]
		\REQUIRE a monotone $k$-submodular function $f:(k+1)^E\to\R$ and a matroid $(E,\Fl)$.
		\ENSURE a vector $\sb$ satisfying $\supp(\sb)\in \Bl$.
		\STATE $\sb\gets\zeros$.
		\FOR{$j=1$ to $M$}
		\STATE $\epre\gets\emptyset,\ \val\gets0$.
		\FOR{{\bf each} $e\in E\backslash\supp(\sb)$ such that $\supp(\sb)\cup\{e\}\in\Fl$}
		\STATE $i\gets\arg\max_{i\in[k]} \Del{e}{i}{\sb}$ 
		\IF{$\Del{e}{i}{\sb}\ge\val$}
		\STATE $\sb(\epre)\gets0$ unless $\epre=\emptyset$.
		\STATE $\sb(e)\gets i$.
		\STATE $\epre\gets e$ and $\val\gets\Del{e}{i}{\sb}$.
		\ENDIF
		\ENDFOR
		\ENDFOR
		\RETURN $\sb$.
		\end{algorithmic}
		\end{algorithm} 
First, we make a remark on using Algorithm~\ref{alg:greedy} in practice. 
In Step~2, the algorithm requires the value of $M$, the size of a maximal independent set.  
However, in practice, we need not calculate the value of $M$ beforehand. 
Instead, we continue the iteration while there exists $e\in E\backslash\supp(\sb)$ satisfying $\supp(\sb)\cup\{e\}\in\Fl$, 
which we check in Step~4.  
We can confirm that this modification does not change the output as follows. 
As long as $|\supp(\sb)|<M$, 
exactly one element is added to $\supp(\sb)$ at each iteration  
due to the monotonicity and \MMM, 
and, if $|\supp(\sb)|=M$, the iteration stops since $\supp(\sb)$ is a maximal independent set.  
Algorithm~\ref{alg:greedy} is described using $M$ to make it easy to understand the subsequent discussions. 
Note that, defining $\sb^{(j)}$ as the solution obtained after the $j$-th iteration, 
we have $|\supp(\sb^{(j)})|=j$ for $j\in[M]$.

We now examine the time complexity of Algorithm~\ref{alg:greedy}. 
Let $\EO$ be the time for the evaluation oracle of the $k$-submodular function $f$, 
and $\MO$ be the time for the membership oracle of the matroid $(E,\Fl)$. 
At the $j$-th iteration, 
the membership oracle is used at most $|E|$ times in Step~4, and 
the evaluation oracle is used  at most $k|E|$ times in Step~5. 
Thus, the time complexity of Algorithm~\ref{alg:greedy} is given by $O(M|E|(\MO + k\EO))$.

\subsection{Proof for 1/2-approximation} 
We now prove that Algorithm~\ref{alg:greedy} gives a $1/2$-approximate solution for problem~\eqref{problem:ksubmomatroid}. 
To prove this, we define a sequence of vectors $\ob^{(0)}, \ob^{(1)}, \dots,\ob^{(M)}$ as 
in~\cite{iwata2016improved,ohsaka2015monotone,ward2014maximizing}.

Let $(e^{(j)},i^{(j)})$ be the pair chosen greedily at the $j$-th iteration, 
and $\sb^{(j)}$ be the solution after the $j$-th iteration; we let $\sb=\sb^{(M)}$, the output of Algorithm~\ref{alg:greedy}.  
We define $\sb^{(0)}:=\zeros$ and let $\ob$ be a maximal optimal solution.  
In what follows, we show how to construct a sequence of vectors $\ob^{(0)}=\ob, \ob^{(1)}, \dots,\ob^{(M-1)},\ob^{(M)}=\sb$ satisfying 
the following:
\begin{align}
& \sb^{(j)}\prec\ob^{(j)}\ \mbox{if}\ j=0,1,\dots,M-1,\ \mbox{and}\ \sb^{(j)}=\ob^{(j)}=\sb\ \mbox{if}\ j=M.   \label{property1} \\
& O^{(j)}\in\Bl \ \mbox{for}\ j=0,1,\dots,M. \label{property2}
\end{align} 
More specifically, we see how to obtain $\ob^{(j)}$ from $\ob^{(j-1)}$ satisfying~\eqref{property1} and~\eqref{property2}.  
Note that $\sb^{(0)}=\zeros$ and $\ob^{(0)}=\ob$ satisfy~\eqref{property1} and~\eqref{property2}.
We define $S^{(j)}:=\supp(\sb^{(j)})$, $O^{(j)}:=\supp(\ob^{(j)})$ for each $j\in[M]$. 

We now describe how to obtain $\ob^{(j)}$ from $\ob^{(j-1)}$, 
assuming that $\ob^{(j-1)}$ satisfies 
\begin{align*}
\sb^{(j-1)}\prec\ob^{(j-1)}, \ \mbox{and}\ O^{(j-1)}\in\Bl. 
\end{align*}
Since $\sb^{(j-1)}\prec\ob^{(j-1)}$ means $S^{(j-1)}\subsetneq O^{(j-1)}$, and
$e^{(j)}$ is chosen to satisfy $S^{(j-1)}\cup\{e^{(j)}\}\in\Fl$, we see from Lemma~\ref{lemma:matroidadd} that 
there exists $e'\in O^{(j-1)}\backslash S^{(j-1)}$ satisfying $\{O^{(j-1)}\backslash\{e'\}\}\cup\{e^{(j)}\}\in\Bl$.  
We let $o^{(j)}=e'$ and define $\ob^{(j-1/2)}$ as the vector obtained by assigning $0$ to the $o^{(j)}$-th element of $\ob^{(j-1)}$. 
We then define $\ob^{(j)}$ as the vector obtained from $\ob^{(j-1/2)}$ by assigning $i^{(j)}$ to the $e^{(j)}$-th element. 
The vector thus constructed, $\ob^{(j)}$, satisfies  
\begin{equation}\label{property1j}
O^{(j)}=\{O^{(j-1)}\backslash\{o^{(j)}\}\}\cup\{e^{(j)}\}\in\Bl.
\end{equation}
Furthermore, since $\ob^{(j-1/2)}$ satisfies 
\[
\sb^{(j-1)}\preceq\ob^{(j-1/2)}, 
\]
we have the following property for $\ob^{(j)}$: 
\begin{equation}\label{property2j}
\sb^{(j)}\prec\ob^{(j)}\ \mbox{if}\ j=1,\dots,M-1, \ \mbox{and}\ \sb^{(j)}=\ob^{(j)}=\sb\ \mbox{if}\ j=M, 
\end{equation} 
where the strictness of the inclusion for $j\in[M-1]$ can be easily confirmed from $|S^{(j)}|=j<M=|O^{(j)}|$. 
Thus, applying the above discussion for $j=1,\dots,M$ iteratively,  
we see from~\eqref{property1j} and~\eqref{property2j} that 
the obtained sequence of vectors $\ob^{(0)},\ob^{(1)},\dots,\ob^{(M)}$ satisfies~\eqref{property1} and~\eqref{property2}.

We now prove the following inequality for $j\in[M]$:
\begin{equation}\label{ineq:stepwise}
f(\sb^{(j)})-f(\sb^{(j-1)})\ge f(\ob^{(j-1)})-f(\ob^{(j)}).
\end{equation} 
Since $S^{(j-1)}\cup\{o^{(j)}\}\subseteq O^{(j-1)}\in\Bl$ holds for each $j\in[M]$, 
we get the following inclusion from \MM:
\begin{equation*}
S^{(j-1)}\cup\{o^{(j)}\}\in\Fl \label{eq:stepwisefeasible}
\end{equation*}
for any $j\in[M]$.  
Therefore, 
for the pair $(e^{(j)}, i^{(j)})$, which is chosen greedily, 
we have  
\begin{equation}\label{ineq1}
	\Del{e^{(j)}}{i^{(j)}}{\sb^{(j-1)}}\ge\Del{o^{(j)}}{\ob^{(j-1)}(o^{(j)})}{\sb^{(j-1)}}.  
\end{equation} 
Furthermore, since $\sb^{(j-1)}\preceq\ob^{(j-1/2)}$ holds, 
orthant submodularity implies 
\begin{equation}\label{ineq2}
\Del{o^{(j)}}{\ob^{(j-1)}(o^{j})}{\sb^{(j-1)}}\ge\Del{o^{(j)}}{\ob^{(j-1)}(o^{j})}{\ob^{(j-1/2)}}.
\end{equation}
Using~\eqref{ineq1} and~\eqref{ineq2}, we get
\begin{align*}
f(\sb^{(j)})-f(\sb^{(j-1)})=&\Del{e^{(j)}}{i^{(j)}}{\sb^{(j-1)}}\\
\ge&\Del{o^{(j)}}{\ob^{(j-1)}(o^{(j)})}{\sb^{(j-1)}} \\
\ge&\Del{o^{(j)}}{\ob^{(j-1)}(o^{j})}{\ob^{(j-1/2)}}\\
\ge&\Del{o^{(j)}}{\ob^{(j-1)}(o^{j})}{\ob^{(j-1/2)}}-\Del{e^{(j)}}{i^{(j)}}{\ob^{(j-1/2)}} \\
=& f(\ob^{(j-1)})-f(\ob^{(j)}),
\end{align*}
where the third inequality comes from the monotonicity, i.e., $\Del{e^{(j)}}{i^{(j)}}{\ob^{(j-1/2)}}\ge0$. 

By~\eqref{ineq:stepwise}, we have 
\[
f(\ob)-f(\sb)=\sum_{j=1}^{M}(f(\ob^{(j-1)})-f(\ob^{(j)}))\le\sum_{j=1}^{M}(f(\sb^{(j)})-f(\sb^{(j-1)}))=f(\sb)-f(\zeros)= f(\sb), 
\]
which means $f(\sb)\ge f(\ob)/2$. 

\section{Conclusions}\label{section:conclusion}
We proved that a $1/2$-approximate solution can be obtained  
for monotone $k$-submodular maximization with a matroid constraint via a greedy algorithm.  
Our approach follows the techniques shown in~\cite{iwata2016improved,ohsaka2015monotone,ward2014maximizing}. 
The proved approximation ratio is asymptotically tight due to the hardness result shown in~\cite{iwata2016improved}. 
We also showed that the proposed algorithm incurs $O(M|E|(\MO + k\EO))$ computation cost.

\bibliographystyle{plain}
\bibliography{./mybib}

\end{document}